\title{Fast Number Parsing Without Fallback}
\author[1\authfn{1}]{Noble Mushtak}
\author[2\authfn{2}]{Daniel Lemire}
\affil[1]{Northeastern University, Boston, MA, United States}
\affil[2]{DOT-Lab Research Center, Universit\'e du Qu\'ebec (TELUQ), Montreal, Quebec, H2S 3L5, Canada}
\newcommand{\scaledw}{v} 
\runningauthor{Mushtak and Lemire}
\newcommand{\qedwhite}{\hfill \ensuremath{\Box}}
\lstdefinestyle{customc}{%
  belowcaptionskip=1\baselineskip,
  breaklines=true,
  xleftmargin=\parindent,
  language=C,
  showstringspaces=false,
  basicstyle=\small\ttfamily,
  keywordstyle=\bfseries\color{green!40!black},
  numberstyle=\tiny,
  commentstyle=\itshape\color{purple!40!black},
  identifierstyle=\bfseries\color{black},
  stringstyle=\color{orange},
   morekeywords={uint64_t,uint32_t,__m256i,__m128i,simd8,uint8_t,UINT64_C},
}
\lstdefinestyle{custompython}{%
  belowcaptionskip=1\baselineskip,
  breaklines=true,
  xleftmargin=\parindent,
  language=Python,
  showstringspaces=false,
  basicstyle=\small\ttfamily,
  keywordstyle=\bfseries\color{green!40!black},
  numberstyle=\tiny,
  commentstyle=\itshape\color{purple!40!black},
  identifierstyle=\bfseries\color{black},
  stringstyle=\color{orange},
   morekeywords={uint64_t,uint32_t,__m256i,__m128i,simd8,uint8_t,UINT64_C},
}
\algnewcommand{\IIf}[1]{\State\algorithmicif\ #1\ \algorithmicthen}
\algnewcommand{\EndIIf}{\unskip\ \algorithmicend\ \algorithmicif}
\begin{document}

\maketitle
\begin{abstract}
In recent work, Lemire (2021) presented a fast algorithm to convert 
number strings into binary floating-point numbers. The algorithm has
been adopted by several important systems: e.g., it is part of the
runtime libraries of GCC~12, Rust~1.55, and Go~1.16.
The algorithm parses any number string with a significand containing no more than 19~digits into an IEEE floating-point number. However, there is a check leading to a fallback
function  to ensure correctness. %, given the lack of a formal proof. 
This fallback function is never called in practice.
We prove that the fallback is unnecessary.
Thus we can slightly simplify the algorithm and its implementation.
%Though the result is not expected to have measurable performance benefits,
%the 

\keywords{Parsing,  IEEE-754, Floating-Point Numbers}

\end{abstract}
%% For algorithm 1, step 2, I think you can condense steps 2-4 into the loop?
%For the proof, I personally like proving it with an error term, i.e. "a * b = c + err", and then proving things about err. But it's just a stylistic thing.

\section{Introduction}

Current computers typically support 32-bit and 64-bit  IEEE-754 binary floating-point numbers in hardware~\cite{30711}. Real numbers are approximated by binary floating-point numbers:
a fixed-width integer $m$ (the \emph{significand}) multiplied by 2 raised to an integer exponent $p$: $m \times 2^p$. Numbers are also frequently exchanged as strings in decimal form (e.g., \texttt{3.1416}, \texttt{1.0e10}, \texttt{4E3}). Given  decimal numbers in a string, we must find efficiently the nearest available binary floating-point numbers when loading data from text files (e.g., CSV, XML or JSON documents).

A  64-bit binary floating-point number relies on a 53-bit significand $m$. A 32-bit binary floating-point number has 24-bit significand $m$.
Given a string representing a non-zero number (e.g., \texttt{-3.14E+12}), we represent
it as a sign (e.g., \texttt{-}), an integer $w \in [1,2^{64})$ (e.g., $w=314$)
and a decimal power (e.g., \texttt{10}): $- 314 \times 10^{10}$.
The smallest positive value that can be represented using a 64-bit floating-point number is $2^{-1074}$. We have that $w\times 10 ^{-343} < 2^{-1074}$ for all $w<2^{64}$. Thus if the decimal exponent is smaller than -342, then the number must be considered to be zero.
If the decimal exponent is greater than 308,  the result must be infinite (beyond the range).

In the simplest terms, to represent a decimal number into a 
binary floating-point number, we need to multiply the decimal 
significand by a power of five: 
$m \times 10^q = (m \times 5^q) \times 2^q$. 
Or, conversely, divide it by a power of five. 
For large powers, an exact computation is impractical
thus we use truncated tables.

Lemire's approach to compute the binary significand is given in 
Algorithm~\ref{algo:fancytotalalgo}~\cite{lemire2021number}: the 
complete algorithm needs to compute the binary exponent, handle 
subnormal numbers, and rounding. 
The algorithm effectively multiplies the 64-bit decimal 
significand with a 128-bit truncated power of five, or the 
reciprocal of a power of five, conceptually producing a 192-bit 
product but truncating it to its most significant 128~bits.
When the last 102 bits of the 128-bit product (for 32-bit floating-point numbers) or the last 73 bits of the 128-bit product (for 64-bit floating-point numbers) is made entirely of 
1~bits, a more accurate computation 
(e.g., relying on the full power of five) may produce a 
different result.
Thus the algorithm may sometimes fail and require a fallback 
(line~\ref{line:failure}). 
However, Lemire did not produce an example where a fallback is needed. We want to show that 
it never happens for 32-bit and 64-bit floating-point numbers 
and that the algorithm always succeeds. Hence,
the check and fallback are unnecessary. The check represents a small computational cost. Table~\ref{tab:myperformancetable} presents the number of instructions and cycles per number in one dataset for two systems.\footnote{\url{https://github.com/lemire/simple_fastfloat_benchmark}}
By removing the check, we reduce the number of instructions and CPU cycles per number parsed by 5\% on one system (Intel) and by slightly over 1\% on another (Apple).

\begin{table}[]
    \centering
    \begin{tabular}{ccc}\toprule
    \midrule
                                & Intel Ice Lake, GCC 11 &  Apple M2, LLVM 14 \\
 base instructions per number       & 271  &  299\\
improved instructions per number       & 257 & 295 \\
  CPU cycles per number       & 57.2 & 44.6 \\
improved  CPU cycles per number      & 55.5 & 43.0 \\\bottomrule
    \end{tabular}
    \caption{Performance comparison while parsing the numbers of the canada dataset~\cite{lemire2021number} using CPU performance counters. The reference is the \texttt{fast\_float} library (version 3.2.0). We estimate a 2\% error margin on the cycle/number metric while the instruction count is nearly error-free. We get the improved numbers by removing the unnecessary check.}
    \label{tab:myperformancetable}
\end{table}

\begin{algorithm}
\begin{algorithmic}[1]
\Require  an integer $w \in [1,2^{64})$ and an integer exponent $q\in(-342,308)$
\Require  a table $T$ containing 128-bit reciprocals and  powers of five  for  powers from $-342$ to $308$ \Comment{\cite[Appendix~B]{lemire2021number}}
\State \label{line:norm1}$l \leftarrow$ the number of leading zeros of $w$ as a 64-bit (unsigned) word
\State \label{line:norm2}$\scaledw \leftarrow 2^l \times w$  \Comment{We normalize the significand.}
\State  \label{line:multiplication} Compute the 128-bit truncated product  $z \leftarrow (T[q] \times \scaledw) \div 2^{64}$.
\If{($z \bmod 2^{73} = 2^{73}-1$ for 64-bit numbers or $z \bmod 2^{102} = 2^{102}-1$ for 32-bit numbers) and $q \notin [-27,55]$ \label{line:failure} } 
\State \textbf{Fallback~needed} \Comment{\cite[Remark~1]{lemire2021number}} 
\EndIf{} 
\State \textbf{Return }\label{line:binarysignificand}$m \leftarrow$ the most significant 55~bits (64-bit) or 26~bits (32-bit) of the product $z$
\end{algorithmic}
\caption{%
Algorithm to compute the binary significand from  a  positive 
decimal floating-point number $w \times 10^q$ for the IEEE-754 
standard.
We compute more bits to allow for exact rounding and to account for  a possible leading zero bit. We use the convention that
$a \bmod b$ is the remainder of the integer division of $a$ by $b$.\label{algo:fancytotalalgo}}
\end{algorithm}

\section{Related Work}

Clinger~\cite{10.1145/93548.93557,10.1145/989393.989430} was maybe earliest in describing accurate and efficient decimal-to-binary conversion techniques. He proposed a fast path using the fact that small powers of 10 can be represented exactly as floats. His fast path is still useful today.  Gay~\cite{gay1990correctly} 
implemented  a fast general decimal-to-binary implementation that is still popular. Gay's strategy is to first find quickly a close approximation, and then to refine it with exact big-integer arithmetic.

The reverse problem, binary-to-decimal conversion, has received much attention~\cite{10.1145/989393.989431}. % 10.1145/2837614.2837654,10.1145/249069.231397,10.1145/1806596.1806623,
Adams~\cite{10.1145/3192366.3192369,10.1145/3360595} bound the maximum and minimum of $ax \bmod b$ over an interval starting at zero, to show that powers of five truncated to 128~bits are sufficient to convert 64-bit binary floating-point numbers into equivalent decimal numbers.

\section{Continued Fractions}
\label{sec:continued}
Given a sequence of integers $a_0, a_1, \ldots$, 
the expression
$
    a_0 + \frac{1}{a_1 + \frac{1}{a_2 + \frac{1}{\cdots}}}
$
is a  \emph{continued fraction}. 
 We can write a rational number $n/d$ where $n$ and $d>0$ are integers as a continued fraction by computing  $a_0, a_1, \ldots$.
 %: starting with $i\leftarrow 0$, we let $a_i$ be the quotient of $n$ divided by $d$, let $r$ be the remainder of the division, let $n\leftarrow d$ and $d\leftarrow r$, we increment $i$ and repeat while $d\neq 0$. 
 A continued fraction  represents 
a sequence of converging values  $ c_0 = a_0 , 
    c_1 = a_0 + \frac{1}{a_1}, 
    c_2 = a_0 + \frac{1}{a_1+ \frac{1}{a_2}}, 
    c_3 = a_0 + \frac{1}{a_1+ \frac{1}{a_2+  \frac{1}{a_3}}}, \ldots$
We call each such value ($c_0, c_1, c_2, \ldots$) a 
\emph{convergent}.
Each convergent is a rational number: we can 
find integers $p_n$ and $q_n$ such that $c_n = p_n/q_n$. 
 We are interested in computing the convergents quickly.
We can check that $p_0 = a_0, q_0 = 1$, $p_1 = a_1  p_0 + 1, 
q_1 = a_1 q_0$, and
 $p_2 = a_2  p_1 + p_0, 
q_2 = a_2  q_1 + q_0$.
We have the general formulas 
 $p_n = a_n  p_{n-1} + p_{n-2}, 
q_n= a_n  q_{n-1} + q_{n-2}$. 
These recursion formulas were known to Euler. 
Rational numbers that are close to a value $x$ are also 
convergents according to the following theorem due to 
Legendre~\cite{hardy1979introduction,legendre1808essai}.

\begin{theorem}\label{theorem:legendre}
For $q>0$ and $p,q$ integers, if the fraction $p/q$ is such that
$\vert p/q-x\vert < \frac{1}{2q^2}$, then 
 $p/q$ is a convergent of $x$.
\end{theorem}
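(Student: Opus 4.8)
The plan is to prove the statement by the classical continued-fraction argument (in the style of Hardy and Wright), realizing $p/q$ explicitly as the final convergent of a carefully chosen finite expansion. First I would reduce to the case $\gcd(p,q)=1$. If $p/q$ is not in lowest terms, write $p/q=p'/q'$ in reduced form with $q'<q$; then $\vert p'/q'-x\vert=\vert p/q-x\vert<\frac{1}{2q^2}<\frac{1}{2q'^2}$, so the hypothesis holds a fortiori for $p'/q'$. Since ``being a convergent'' is a property of the rational value rather than of a particular representation, it suffices to treat the reduced fraction.

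Next, because $p/q$ is rational it has a finite continued-fraction expansion $p/q=[a_0;a_1,\ldots,a_n]$, and I would use the one-step ambiguity $[a_0;\ldots,a_n]=[a_0;\ldots,a_n-1,1]$ to give the length $n$ whichever parity I need. Writing $p_n/q_n=p/q$ for the last convergent of this expansion and setting $x=p_n/q_n+\theta/q_n^2$, the hypothesis becomes exactly $\vert\theta\vert<\tfrac12$. The key algebraic step is to introduce the tail value $\omega$ defined by $x=\frac{\omega p_n+p_{n-1}}{\omega q_n+q_{n-1}}$ and solve for it. Using $xq_n-p_n=\theta/q_n$ together with the determinant identity $p_{n-1}q_n-p_nq_{n-1}=(-1)^n$ forced by the recurrences $p_n=a_np_{n-1}+p_{n-2}$, $q_n=a_nq_{n-1}+q_{n-2}$, a short computation yields the clean formula $\omega=\frac{(-1)^n}{\theta}-\frac{q_{n-1}}{q_n}$.

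Then I would choose the parity of $n$ so that $(-1)^n$ has the same sign as $\theta$; since $\vert\theta\vert<\tfrac12$ this forces $(-1)^n/\theta>2$, and because $0\le q_{n-1}/q_n\le 1$ we obtain $\omega>2-1=1$. The inequality $\omega>1$ is exactly what allows us to expand $\omega$ as its own continued fraction $\omega=[a_{n+1};a_{n+2},\ldots]$ with leading term $a_{n+1}=\lfloor\omega\rfloor\ge1$, so that $x=[a_0;a_1,\ldots,a_n,a_{n+1},\ldots]$ is a legitimate continued fraction and $p_n/q_n=p/q$ appears as its $n$-th convergent, as claimed.

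I expect the delicate part to be the bookkeeping around the parity choice and the small-index degeneracies. One must verify the sign in the determinant identity against the recurrence convention used in the paper, confirm that the parity-switching trick preserves both the value $p/q$ and the coprimality, and handle the boundary cases: the exact case $\theta=0$ (where $x=p/q$ is trivially the last convergent and the formula for $\omega$ does not apply), and the smallest indices where $q_{n-1}/q_n$ fails to be strictly below $1$. In each of these the point is to secure the \emph{strict} inequality $\omega>1$, since that strictness is precisely what guarantees the appended tail is a genuine continued fraction rather than a spurious one; obtaining it cleanly is the crux of the argument.
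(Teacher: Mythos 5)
The paper itself does not prove this theorem: it is quoted as a classical result of Legendre, with the proof delegated to the cited references (Hardy and Wright; Legendre's \emph{Essai}). So there is no ``paper proof'' to match against; what you have done is reconstruct the standard argument from exactly that literature, and your reconstruction is correct. The reduction to $\gcd(p,q)=1$, the parity-switching trick $[a_0;\ldots,a_n]=[a_0;\ldots,a_n-1,1]$, the formula $\omega=\frac{(-1)^n}{\theta}-\frac{q_{n-1}}{q_n}$ obtained from $x q_n-p_n=\theta/q_n$ and the determinant identity, and the conclusion $\omega>1$ (so that $\omega$ can be expanded and appended as a genuine tail) are precisely the Hardy--Wright proof. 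Your sign convention $p_{n-1}q_n-p_nq_{n-1}=(-1)^n$ is consistent with the recurrences as the paper sets them up ($p_0=a_0$, $q_0=1$, $p_1=a_1p_0+1$, $q_1=a_1$ give $p_0q_1-p_1q_0=-1$), and the delicate points you flag ($\theta=0$, small indices, strictness of $\omega>1$) are the right ones and resolve as you expect; note that strictness is automatic because $\vert\theta\vert<\tfrac12$ is strict, so $(-1)^n/\theta>2$ even when $q_{n-1}/q_n=1$. One further subtlety worth recording, since the paper applies the theorem to the rational number $x=T[q]/2^{137}$ and enumerates convergents of a single (canonical) expansion: a rational $x$ has two finite expansions whose convergent lists differ in one penultimate entry, and your construction only shows $p/q$ is a convergent of \emph{some} expansion of $x$. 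This is harmless, because the extra convergent of the non-canonical expansion is at distance $\frac{1}{q_N(q_N-q_{N-1})}$ from $x$, and when the canonical expansion ends with a partial quotient $a_N\geq 2$ one has $q_N\geq 2q_{N-1}$, so that convergent can never satisfy the Legendre inequality; hence any fraction satisfying the hypothesis is a convergent of the canonical expansion, which is what the paper's enumeration actually checks.
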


%Because the convergents have  monotonically increasing denominators, and because
%they are best approximations given a numerator, 

\section{Ruling Out Fallback Cases}

%(Check that his analysis works for both 32-bit and 64-bit numbers.)

Let $w$ and $q$ be the decimal significand and the decimal exponent, respectively, of the decimal number to be converted into a binary floating-point number. We require that the decimal significand
is a positive 64-bit number. We multiply $w$ by a power of two to produce $\scaledw \in[2^{63}, 2^{64})$. 
Let $T[q]$ be the 128-bit number representing a truncated power of five or the reciprocal of a power of five.  The fallback condition may be needed when the least significant 73~bits of the most significant 128~bits of the product of $\scaledw$ and $T[q]$ are all 1~bits. When parsing 32-bit floating-point numbers, we have a
more generous margin (102~bits instead of 73~bits), so it is sufficient to 
review the 64-bit case.

Assume that the least significant 73~bits of the most significant 128~bits of the product of $\scaledw$ and $T[q]$ are all 1~bits. It is equivalent to requiring that the number made of the least significant 137~bits of the product, 
$r=(T[q]\times \scaledw)\bmod 2^{137}$, is  larger than or equal to $2^{137}-2^{64}$: $r \geq 2^{137}-2^{64}$. By the next lemma, we have that $\scaledw$ is the denominator of a convergent of $T[q]/2^{137}$.

\begin{lemma}\label{lemma:convergent}
Given positive integers $T[q]<2^{128}$ and $\scaledw<2^{64}$, we have that $(T[q]\times \scaledw)\bmod 2^{137}\geq 2^{137}-2^{64}$ implies that $n/\scaledw$, for 
$n  = 1+\lfloor T[q] \times \scaledw /2^{137} \rfloor$, is a convergent of $T[q]/2^{137}$.
\end{lemma}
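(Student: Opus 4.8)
The plan is to invoke Legendre's theorem (Theorem~\ref{theorem:legendre}) with $x = T[q]/2^{137}$, numerator $p = n$, and denominator $q = \scaledw$. Since $n = 1 + \lfloor T[q]\,\scaledw/2^{137}\rfloor$ is an integer and $\scaledw$ is a positive integer, the only thing I need to verify is the approximation hypothesis $\left| n/\scaledw - T[q]/2^{137}\right| < 1/(2\scaledw^2)$; the conclusion that $n/\scaledw$ is a convergent of $T[q]/2^{137}$ then follows immediately.

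First I would set $P = T[q]\,\scaledw$ and write $P = 2^{137}\lfloor P/2^{137}\rfloor + r$ where $r = P \bmod 2^{137}$ is the quantity appearing in the hypothesis. Substituting $n = 1 + \lfloor P/2^{137}\rfloor$ gives $n\cdot 2^{137} - P = 2^{137} - r$, so the approximation error has the exact closed form $n/\scaledw - T[q]/2^{137} = (n\cdot 2^{137} - P)/(2^{137}\scaledw) = (2^{137}-r)/(2^{137}\scaledw)$. The point of choosing $n = 1 + \lfloor P/2^{137}\rfloor$ (rather than the floor itself) is precisely that it makes this error positive and small under the hypothesis.

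Next I would bound the error. Because $r$ is a remainder modulo $2^{137}$ we have $r < 2^{137}$, hence $2^{137}-r > 0$; and the hypothesis $r \geq 2^{137}-2^{64}$ gives $2^{137}-r \leq 2^{64}$. Therefore $0 < n/\scaledw - T[q]/2^{137} \leq 2^{64}/(2^{137}\scaledw) = 1/(2^{73}\scaledw)$. It then remains to compare this with the Legendre threshold: $1/(2^{73}\scaledw) < 1/(2\scaledw^2)$ is equivalent to $2\scaledw < 2^{73}$, i.e.\ to $\scaledw < 2^{72}$, which holds comfortably since $\scaledw < 2^{64}$. This establishes the hypothesis of Theorem~\ref{theorem:legendre} and completes the argument.

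The proof itself is short and mostly mechanical; the real content is the observation that the ``least significant $73$ bits are all ones'' condition is exactly a Legendre-type closeness condition once it is rewritten as $r \geq 2^{137}-2^{64}$. The only subtlety to watch is the margin in the final inequality: the hypothesis $\scaledw < 2^{64}$ leaves a factor of $2^{8}$ of slack against the required $\scaledw < 2^{72}$, and it is exactly this slack---tied to the choice of $73$ as the number of trailing one-bits---that guarantees the strict inequality. I would also be careful to keep the error strictly positive, so that a single one-sided closed-form bound, rather than a two-sided absolute value, suffices.
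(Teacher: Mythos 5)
Your proof is correct and is essentially identical to the paper's own argument: both write $T[q]\scaledw = 2^{137}\lfloor T[q]\scaledw/2^{137}\rfloor + r$, express the approximation error exactly as $(2^{137}-r)/(2^{137}\scaledw)$, bound it by $1/(2^{73}\scaledw)$ using the hypothesis $r \geq 2^{137}-2^{64}$, and then invoke Theorem~\ref{theorem:legendre} after checking $1/(2^{73}\scaledw) < 1/(2\scaledw^2)$ via $\scaledw < 2^{64}$. The only cosmetic difference is that you keep the error one-sided and positive rather than writing it under an absolute value, which changes nothing of substance.
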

\begin{proof}
 %$$ by $2^{137}$:
 By definition,  we have that
 $a = b \lfloor a /b \rfloor + (a \bmod b)$ given two integers $a,b$ when $b>0$.
 Thus, 
given $r=(T[q]\times \scaledw)\bmod 2^{137}$, we have
that $(T[q]\times \scaledw) = 2^{137}\lfloor T[q] \times \scaledw /2^{137} \rfloor + r.$
Substracting $r$ from both sides, we get
$(T[q]\times \scaledw) -r  = 2^{137}\lfloor T[q] \times \scaledw /2^{137} \rfloor.$
Adding $2^{137}$ on both sides, we get 
$(T[q]\times \scaledw) + 2^{137}-r  = 2^{137} + 2^{137}\lfloor T[q] \times \scaledw /2^{137} \rfloor = 2^{137} n.$
Let $x=2^{137}-r$, then we have
 $(T[q]\times \scaledw) + x =  2^{137} n$. 
 And because $\scaledw$ is non-zero we can divide by $ \scaledw \times 2^{137}$  throughout to get 
  $\frac{T[q]}{2^{137}}  + \frac{x}{\scaledw \times 2^{137}} =  n/\scaledw. $

Because $(T[q]\times \scaledw)\bmod 2^{137}\geq 2^{137}-2^{64}$,
we have that $r\geq 2^{137}-2^{64}$, and so
$2^{137}-r \leq 2^{64}$ or $x\leq 2^{64}$.
  
  Finally, we have
\begin{flalign*}
&& \left\vert \frac{ T[q]}{2^{137}}  - \frac{n}{\scaledw} \right \vert & = \frac{x}{ \scaledw\times 2^{137}}  && \text{} \\
\Rightarrow &&  &\leq \frac{1}{  \scaledw\times 2^{73}} && \text{because $ x\leq 2^{64}$} \\
\Rightarrow &&  &<\frac{1}{  2 \scaledw^2}  && \text{because $\scaledw<2^{64}$.} 
\end{flalign*}
Hence,  we have that $n/\scaledw$ is a convergent of  
the rational number $T[q]/2^{137}$ by Theorem~\ref{theorem:legendre}. \qedwhite
\end{proof}

Thus, if the fallback condition is triggered by some decimal significand $\scaledw$ and some decimal scale $q$, the significand $\scaledw$ must be the denominator of a convergent of $T[q]/2^{137}$.

It remains to show that it suffices to check the convergent as simple fractions.
If  $n/\scaledw$ is a convergent, then $(n/d)/(\scaledw/d)$ where $d= \gcd(n, d)$ is the simple counterpart. If $\scaledw<2^{64}$, then $\scaledw/d<2^{64}$. We want to show that if $(T[q]\times \scaledw)\bmod 2^{137}\geq 2^{137}-2^{64}$
then $(T[q]\times \frac{\scaledw}{d} )\bmod 2^{137}\geq 2^{137}-2^{64}$.

Consider the following technical lemma which relates $ a\times \frac{\scaledw}{d} \bmod 2^{137}$ and $(a\times \scaledw)\bmod 2^{137}$.
\begin{lemma}\label{lemma:rescale}
    Given positive integers  $a$ and $\scaledw$, we have that
    $
     \left (   a\times \frac{ \scaledw}{d} \right ) \bmod 2^{137}
    = 2^{137}+  \frac{(a\times \scaledw)\bmod 2^{137}- 2^{137}}{d}
   $
where $d= \gcd (\scaledw, 1+\lfloor a\times \scaledw / 2^{137}\rfloor)$.
\end{lemma}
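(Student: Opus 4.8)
The plan is to recycle the arithmetic identity established in the proof of Lemma~\ref{lemma:convergent}. Writing $n = 1 + \lfloor a \times \scaledw / 2^{137} \rfloor$ and $x = 2^{137} - \left((a\times\scaledw) \bmod 2^{137}\right)$, the division algorithm gives the single master equation $a \times \scaledw + x = 2^{137} n$. The quantity $n$ here is exactly the second argument of the $\gcd$ defining $d$, so $d = \gcd(\scaledw, n)$, and the whole proof will amount to dividing this master equation through by $d$.

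Next I would factor out $d$. Set $\scaledw = d\,\scaledw'$ and $n = d\,n'$ with $\scaledw', n'$ positive integers. Substituting into the master equation yields $a\,d\,\scaledw' + x = 2^{137}\,d\,n'$, whence $x = d\left(2^{137} n' - a\,\scaledw'\right)$. This shows simultaneously that $d \mid x$ and that $x/d = 2^{137} n' - a\,\scaledw'$ is an integer, giving the reduced identity $a\,\scaledw' + x/d = 2^{137} n'$, i.e. $a \times \frac{\scaledw}{d} = 2^{137} n' - x/d$.

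It remains to read off the residue modulo $2^{137}$ and to match it to the stated right-hand side. Because $(a\times\scaledw)\bmod 2^{137} \in [0, 2^{137})$, we have $x \in (0, 2^{137}]$, and since $d \geq 1$ the integer $x/d$ satisfies $1 \leq x/d \leq 2^{137}$. Hence $2^{137} - x/d \in [0, 2^{137})$ is a legitimate residue, and from $a \times \frac{\scaledw}{d} = 2^{137}(n'-1) + \left(2^{137} - x/d\right)$ we conclude $\left(a\times\frac{\scaledw}{d}\right)\bmod 2^{137} = 2^{137} - x/d$. Finally, substituting $x = 2^{137} - \left((a\times\scaledw)\bmod 2^{137}\right)$ turns $2^{137} - x/d$ into $2^{137} + \frac{(a\times\scaledw)\bmod 2^{137} - 2^{137}}{d}$, which is precisely the claimed formula.

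The only genuinely delicate points are the two bookkeeping facts that make the last paragraph work: that $d$ actually divides $x$ (so that $x/d$ is an integer and dividing the master equation is legal), and the range bound $0 < x/d \leq 2^{137}$ that pins the residue down to $2^{137} - x/d$ rather than some shifted value. Both follow immediately once the master equation is divided by $d$, so I expect no real obstacle; the lemma is essentially a rescaling of the identity underlying Lemma~\ref{lemma:convergent}.
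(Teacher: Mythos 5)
Your proof is correct and follows essentially the same route as the paper's: both divide the division-algorithm identity $a\times \scaledw = 2^{137}n + \bigl((a\times\scaledw)\bmod 2^{137}\bigr) - 2^{137}$ (your ``master equation,'' rearranged) by $d=\gcd(\scaledw,n)$ and read off the residue. The only difference is that you make explicit two bookkeeping facts the paper leaves implicit --- that $d$ divides $x$ and that $2^{137}-x/d$ lies in $[0,2^{137})$ --- which is a welcome tightening, not a different argument.
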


\begin{proof}
%Let $z= (a\times w)\bmod 2^{137}$.
We have that $a \times \scaledw = 2^{137}( 1+\lfloor a\times \scaledw / 2^{137}\rfloor)  +( (a\times \scaledw)\bmod 2^{137}) - 2^{137}$.
Given $d= \gcd (\scaledw,  1+\lfloor a\times \scaledw / 2^{137}\rfloor)$, then 
$( 1+\lfloor a\times \scaledw / 2^{137}\rfloor)/d$ and $\scaledw/d $ are integers.
We have that $a \times \scaledw/d = 2^{137} ( 1+\lfloor a\times \scaledw / 2^{137}\rfloor)/d +( (a\times \scaledw)\bmod 2^{137} - 2^{137})/d$.
Thus it follows that $(a\times \scaledw/d)\bmod 2^{137} = 2^{137}+( (a\times \scaledw)\bmod 2^{137} - 2^{137})/d$.\qedwhite
\end{proof}

Suppose that  $(T[q]\times \scaledw)\bmod 2^{137}\geq 2^{137}-2^{64}$, then by Lemma~\ref{lemma:rescale}, we have 
\begin{align*}
\left (  T[q]\times \frac{\scaledw}{d} \right )\bmod 2^{137}
    &= 2^{137}+  \frac{(T[q]\times \scaledw)\bmod 2^{137}- 2^{137}}{d}  \\
 &\geq 2^{137}+  \frac{2^{137}-2^{64}- 2^{137}}{d} \\
 &\geq 2^{137}-2^{64}/d.   
\end{align*}
Hence we have that if $(1+\lfloor T[q] \times \scaledw /2^{137} \rfloor)/\scaledw$ is a convergent such that $(T[q]\times \scaledw)\bmod 2^{137}\geq 2^{137}-2^{64}$, then the 
simplified convergent,
$\scaledw' \leftarrow \scaledw/d$, also satisfies
$(T[q]\times \scaledw')\bmod 2^{137}\geq 2^{137}-2^{64}$ since $d\geq 1$. Thus it is sufficient to check the convergents in simplified form.
We have proven the following proposition.

\begin{proposition}\label{proposition:final}
Given a positive integer $T[q]$, we have that  there exists a positive integer
$\scaledw<2^{64}$ such that $(T[q]\times \scaledw)\bmod 2^{137}\geq 2^{137}-2^{64}$
if and only if there is a continued-fraction convergent $p'/q'$ of  $T[q]/2^{137}$
such that $p'/q'$ is a simple fraction ($\gcd(p',q')=1$), $q'<2^{64}$
and  $(T[q]\times q')\bmod 2^{137}\geq 2^{137}-2^{64}.$
\end{proposition}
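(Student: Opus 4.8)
The plan is to prove the biconditional by treating the two directions separately, observing that nearly all of the needed machinery has already been assembled before the statement. The reverse direction ($\Leftarrow$) is immediate: if a convergent $p'/q'$ of $T[q]/2^{137}$ satisfies $q' < 2^{64}$ and $(T[q] \times q') \bmod 2^{137} \geq 2^{137} - 2^{64}$, then $\scaledw = q'$ is itself a positive integer below $2^{64}$ witnessing the existence claim, since the denominator of a convergent is positive. No computation is required here.

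For the forward direction ($\Rightarrow$), I would begin from a witness $\scaledw < 2^{64}$ with $(T[q] \times \scaledw) \bmod 2^{137} \geq 2^{137} - 2^{64}$ and invoke Lemma~\ref{lemma:convergent} to conclude that $n/\scaledw$ is a convergent of $T[q]/2^{137}$, where $n = 1 + \lfloor T[q] \times \scaledw / 2^{137} \rfloor$. Because the convergents generated by the continued-fraction recurrence are coprime pairs, this convergent agrees, as a rational number, with its reduced form. I would therefore set $d = \gcd(n, \scaledw)$, $p' = n/d$ and $q' = \scaledw/d$, so that $p'/q'$ is a simple fraction ($\gcd(p', q') = 1$) equal to $n/\scaledw$, and the bound $q' = \scaledw/d \leq \scaledw < 2^{64}$ follows at once from $d \geq 1$.

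The only remaining point is that the reduced denominator $q'$ must still trigger the fallback inequality, and this is exactly the content of the estimate derived from Lemma~\ref{lemma:rescale} in the paragraph preceding the proposition: taking $a = T[q]$, the divisor $d = \gcd(\scaledw, 1 + \lfloor T[q] \times \scaledw / 2^{137}\rfloor)$ appearing in that lemma coincides with the $d = \gcd(n, \scaledw)$ chosen above, and the displayed chain of inequalities already shows $(T[q] \times q') \bmod 2^{137} \geq 2^{137} - 2^{64}/d \geq 2^{137} - 2^{64}$. Combining these facts completes the forward direction, and the proposition follows.

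I do not anticipate a genuine obstacle, since the two lemmas and the intervening estimate carry the real weight. The one step that warrants care is verifying that the greatest common divisor governing the reduction of the convergent in Lemma~\ref{lemma:convergent} is literally the same $d$ that governs the rescaling in Lemma~\ref{lemma:rescale}; once that identification is made, reducing the fraction and preserving the modular inequality become a single compatible construction rather than two independent arguments.
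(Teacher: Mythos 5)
Your proposal is correct and follows essentially the same route as the paper, which assembles the proposition from Lemma~\ref{lemma:convergent} (via Legendre's theorem), the reduction by $d=\gcd(n,\scaledw)$, and the inequality chain from Lemma~\ref{lemma:rescale}, with the converse direction being immediate. Your explicit observation that the $d$ in Lemma~\ref{lemma:rescale} coincides with $\gcd(n,\scaledw)$ because $n = 1+\lfloor T[q]\times\scaledw/2^{137}\rfloor$ is exactly the identification the paper uses implicitly.
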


For any decimal exponent $q$, to check whether there exists a decimal significand $1 \leq \scaledw < 2^{64}$ which triggers the fallback condition, it suffices
to examine all  convergents of $T[q]/2^{137}$ with a denominator less than $2^{64}$ and check whether  $(T[q]\times \scaledw)\bmod 2^{137}\geq 2^{137}-2^{64}$.
We can convert  $T[q]/2^{137}$ into a continued fraction and compute the coefficients
$a_0, a_1, a_2, \ldots$. We can then use the recursive formulas for computing convergents to enumerate all convergents as simple fractions~\cite{hardy1979introduction}.
Given convergents as simple fraction $p_0/q_0, p_1/q_1, \ldots$, we check whether $(T[q] \times q_i)\bmod 2^{137}\geq 2^{137}-2^{64}$ for $i =0,1,\ldots$ It is enough to check the convergents with a denominator smaller than $2^{64}$.
We implement this algorithm with a Python script: the script reports that no fallback is needed.
Hence  no fallback is required for 64-bit significands in the number-parsing algorithm described by Lemire~\cite{lemire2021number}.

\bibliography{fastfloat}

\end{document}